\date{\today}
\tikzstyle{startstop} = [rectangle, rounded corners, 
\tikzstyle{io} = [trapezium, 
\tikzstyle{process} = [rectangle, 
\tikzstyle{decision} = [diamond, 
\tikzstyle{arrow} = [thick,->,>=stealth]
\DeclarePairedDelimiter{\bra}{\langle}{\rvert}%
\DeclarePairedDelimiter{\ket}{\lvert}{\rangle}%
\DeclarePairedDelimiter{\expval}{\langle}{\rangle}%
\DeclarePairedDelimiter{\abs}{\lvert}{\rvert}%
\DeclarePairedDelimiterX\innerp[2]{\langle}{\rangle}{#1\delimsize\vert\mathopen{}#2}%
\DeclarePairedDelimiterX\expvalOP[3]{\langle}{\rangle}{#1\,\delimsize\vert\,\mathopen{}#2\,\delimsize\vert\,\mathopen{}#3}%
\DeclarePairedDelimiterX\ketbra[2]{\lvert}{\rvert}{#1\delimsize\rangle\!\delimsize\langle#2}%
\DeclarePairedDelimiterX\outerp[2]{\lvert}{\rvert}{#1\delimsize\rangle\!\delimsize\langle#2}%
\DeclarePairedDelimiterX\projector[1]{\lvert}{\rvert}{#1\delimsize\rangle\!\delimsize\langle#1}%
\DeclareMathOperator{\tr}{tr}%
\DeclareMathOperator*{\spn}{span}
\newtheorem{theorem}{Theorem}
\newtheorem{corollary}{Corollary}
\newtheorem{lemma}{Lemma}
\newtheorem{definition}{Definition}
\begin{document}
\title{Local unitary decomposition of tripartite arbitrary leveled qudit stabilizer states into \(p\)-level-qudit EPR and GHZ states}
\author{Yat Wong}
\affiliation{Pritzker School of Molecular Engineering, University of Chicago, Chicago, Illinois 60637, USA}
\author{Liang Jiang}
\affiliation{Pritzker School of Molecular Engineering, University of Chicago, Chicago, Illinois 60637, USA}
\begin{abstract}
We study the entanglement structure of tripartite stabilizer states on \(N\) qudits of dimension \(D\), distributed across parties \(A\), \(B\), and \(C\), under arbitrary local unitaries. Prior work by Bravyi et al. and Looi et al. showed that qubit and squarefree qudit stabilizer states can be transformed via local Clifford unitaries into tensor products of GHZ states, EPR pairs, and unentangled qudits\cite{Bravyi_2006,Looi_2011}. We generalize this to arbitrary integer \(D\) by introducing local unitaries beyond the Clifford group, enabling decomposition of prime-power qudit stabilizer states into \(p\)-level GHZ states, EPR pairs, and unentangled qudits. Our algorithm leverages subsystem phase matrices to characterize entanglement and applies to quantum protocols requiring efficient entanglement distribution.
\end{abstract}

\maketitle
\section{Introduction}
Various quantum protocols have been proposed based on distributed quantum architectures. Quantum protocols, such as quantum key distribution, distributed quantum sensing, and distributed quantum computing, rely on efficient entanglement distribution across multiple nodes. Understanding and manipulating multipartite entangled states is thus critical for shaping entanglement into desired forms, such as GHZ or EPR states, for these applications. While general tripartite pure states are equivalent to bipartite mixed states and lack a full characterization beyond density matrices, specific classes like stabilizer states admit simpler descriptions. Bravyi et al. \cite{Bravyi_2006} showed that tripartite qubit stabilizer states are reducible via local Clifford unitaries to GHZ states, EPR pairs, and unentangled qubits, a result extended by Looi et al. \cite{Looi_2011} to squarefree qudit dimensions. Unlike prior work limited to squarefree \(D\), analyzing \(D=p^n\) involves handling states requiring more than \(N\) generators, where \(N\) is the number of qudits.

Using a novel tool, subsystem phase matrix, to capture entanglement properties, we extend these results to arbitrary qudit dimensions \(D\), including prime-power cases, by allowing local unitaries beyond the generalized Clifford group. Section \ref{sec:prelim} introduces notation and preliminaries, Section \ref{sec:props} discusses special properties of prime-power stabilizer states, Section \ref{sec:spm} defines the subsystem phase matrix, Section \ref{sec:lemmas} presents key lemmas, and Section \ref{sec:main} details our main decomposition theorems.
\section{Preliminaries and Notation}\label{sec:prelim}
We consider a system of \(N\) qudits, each with dimension \(D\), partitioned into subsystems \(A\), \(B\), and \(C\).
The single qudit generalized Pauli operators, also known as Weyl–Heisenberg operators, are generated by the shift operator \(X=\sum_j\ket{j+1}_i\bra{j}_i\) and the clock operator \(Z=\sum_j\omega^j\ket{j}\bra{j}\). They can be represented by a two-dimensional free module element \(\left(x,z\right)\) and an extra number \(\gamma\) to account for the phase, i.e. \(\omega^\gamma\sigma_{x,z}=\omega^{\gamma} X^xZ^z\), where \(\omega=e^{\frac{2\pi i}{D}}\).  \(\gamma\) is always an integer if \(D\) is odd, and either an integer or half-integer for even \(D\). \(N-\)qudit Pauli operator \(\sigma\in\mathbb{P}_D^N\) can thus be represented by \(\omega^\gamma\sigma_\mu=\omega^{\gamma}\prod_iX_i^{x_{\mu,i}}Z_i^{z_{\mu,i}}\), where \(\mu=(\vec{x}_\mu^T,\vec{z}_\mu^T)^T\) is the \(2N\)-dimensional module element representing how many times shift/clock operators are applied on each qudit. Note that modules defined over \(\mathbb{Z}_D\) is a vector space if and only if \(D\) is a prime, since \(\mathbb{Z}_D\) would then be a field. For squarefree but composite \(D\), the module can be written as direct products of vector spaces due to the Chinese remainder theorem, hence most properties of stabilizer states with prime \(D\) carries over to stabilizer states with squarefree composite \(D\). 
The multiplication rule and thus commutation rule between \(\sigma\)-operators are given by \begin{equation}\sigma_\mu\sigma_\nu=\omega^{\vec{z}_\mu\cdot\vec{x}_\nu}\sigma_{\mu+\nu}=\omega^{\nu^T\Omega_N \mu}\sigma_\nu\sigma_\mu,\end{equation}
where \(\Omega_N=\begin{pmatrix}0&I_N\\-I_N&0\end{pmatrix}\).
A generalized Pauli operator \(\sigma\) is a stabilizer of a state \(\ket{\psi}\) if and only if the state is an eigenstate of the operator with eigenvalue 1, i.e. \(\sigma\ket{\psi}=\ket{\psi}\), thus not all Pauli operators are a stabilizer of some state, such as \(\omega I\).
A generalized Clifford unitary \(U\in\mathbb{C}\) transforms the set of generalized Pauli operators into itself. Due to the properties of the generalized Pauli group, especially the commutation rules, we can show that for any \(U\) there exists \(M\) such that
\begin{equation}U\sigma_{\mu}U^\dagger\propto\sigma_{M\mu}\end{equation}
where \(M\) is a symplectic matrix that satisfies \(M\Omega_N M^T=\Omega_N\).
Generalized Clifford group, the group of all generalized Clifford unitaries, can be represented by all possible sympectic \(M\), and is generated by phase gate \(P=\sum_j\omega^{j^2/2}\ket{j}\bra{j}\), qudit Fourier transform (also generalized Hadamard gate) \(H=D^{-1/2}\sum_{jk}\omega^{jk}\ket{j}\bra{k}\), and a two-qudit gate Controlled-Z \(CZ=\sum_{jk}\omega^{jk}\ket{j,k}\bra{j,k}\), up to a generalized Pauli operator and a global phase. From here, we denote generalized Pauli and generalized Clifford as Pauli and Clifford.
A stabilizer group, an Abelian subgroup of Pauli operators where the only element proportional to the identity is identity itself, can be represented as an isotropic submodule where the only element with \(\vec{x}=\vec{z}=\vec{0}\) must satisfy \(\gamma=0\).
When a stablizer group \(S\) has the same number of elements as the dimension of the Hilbert space, i.e. \(\abs{S}=D^N\), the stabilizer group stabilizes only one pure state, i.e.
\begin{equation}\forall s\in S, s\ket{\psi}=\ket{\psi} \Rightarrow \projector{\psi}=D^{-N}\sum_{s\in S}s.\end{equation}
Hence, we define the stabilizer state specified by \(S\) as \(\projector{S}=D^{-N}\sum_{s\in S}s\).
For composite but not prime-power \(D\), Theorem 4 of \cite{Looi_2011} establishes an isomorphism to a tensor product of prime-power \(D'_i=p_i^{n_i}\) stabilizer formalisms, where \(\prod_iD'_i=D\), allowing us to focus on \(D=p^n\).
\section{Special Properties of Prime Power Dimension Stabilizer States}\label{sec:props}

Unlike squarefree \(D\), where stabilizer groups have a dimension up to \(N\), prime-power \(D=p^n\) stabilizer groups may have dimension up to \(2N\), complicating their entanglement structure. As a result, the dimension of a stabilizer group representing a single pure state ranges from \(N\) to \(2N\), whereas squarefree \(D\) stabilizer states can always be represented with a stabilizer group with dimension \(N\). It can be shown that such dimension does not vary under \(D\)-level Clifford operations, and hence there are pairs of stabilizer states that are not connected to each other through Clifford operations. For instance, for \(D=4\), the state \(\frac{1}{\sqrt{2}}\left(\ket{0}+\ket{2}\right)\) is stabilized by \(\expval{X^2,Z^2}\), and there is no Clifford unitary that prepares this state from \(\ket{0}\). One implication of this is not all stabilizer states can be represented with graph states, since the stabilizer group of graph states must have \(N\) generators. Note that for composite squarefree \(D\), there exists states that look similar, yet they are still connected to \(\ket{0}^{\otimes N}\) with Clifford unitaries.

Such phenomena is not limited to a single qudit. One remarkable example is the \(D=9\) \(N=3\) state stabilized by \(S=\expval{X_1X_2X_3,X_1^3X_2^6,Z_1Z_2Z_3,Z_1^3Z_2^6}\) is not connected to the GHZ state, stabilized by \(S'=\expval{X_1X_2X_3,Z_1Z_2^8,Z_1Z_3^8}\), through any Clifford operation, yet they are connected by some qudit-local unitaries. The two states can be written as products of two \(D=3\) GHZ states in different ways:
\begin{equation}
\ket{S}=\frac{1}{3}\left(\ket{000}+\ket{012}+\ket{021}+\ket{102}+\ket{111}+\ket{120}+\ket{201}+\ket{210}+\ket{222}\right)\otimes\frac{1}{\sqrt{3}}\left(\ket{000}+\ket{111}+\ket{222}\right),
\end{equation}
\begin{equation}
\ket{S}'=\frac{1}{\sqrt{3}}\left(\ket{000}+\ket{111}+\ket{222}\right)\otimes\frac{1}{\sqrt{3}}\left(\ket{000}+\ket{111}+\ket{222}\right).
\end{equation}
However, due to the structure of the prime-power-dimensional qudit Pauli group, one cannot rotate the first part without affecting the second part with \(D=9\) Clifford unitaries, since Clifford unitaries cannot alter the dimension of a subgroup of the Pauli group. As demonstrated by this state, analysis based only on Clifford operations cannot thoroughly analyze the entanglement properties of stabilizer states with prime power \(D\). To accommodate such states, we introduce a tool that is invariant under arbitrary local Clifford operations and simultaneously allow some local unitaries beyond Clifford group.
\section{Definition of Subsystem Phase Matrix}\label{sec:spm}
Most entanglement characterization of stabilizer states with squarefree \(D\) relies on having the same number of generators as the number of qudits, which is generally invalid for prime power \(D\). To characterize local Clifford invariants in such situation, we define subsystem phase matrices, which capture the commutation relations of stabilizers across parties. 
\begin{definition}[Subsystem Phase Matrix]
A set of \(m\) \(\left(\mathbb{Z}_{p^n}\right)^{\mathcal{N}\times\mathcal{N}}\) matrices \(\left\{M_\alpha\right\}\) is a valid set of subsystem phase matrices for an \(m\)-partite \(D=p^n\) stabilizer state \(\ket{S}\) if and only if there exists a pair of functions \(f:S\rightarrow\left(\mathbb{Z}_{p^n}\right)^\mathcal{N}\) and \(F:\left(\mathbb{Z}_{p^n}\right)^\mathcal{N}\rightarrow S\) such that
\begin{equation}
\forall s,s'\in S,s_\alpha s_\alpha'=\omega^{f^T(s')M_\alpha f(s)}s_\alpha's_\alpha,
\end{equation}
\begin{equation}
\forall v,v'\in \left(\mathbb{Z}_{p^n}\right)^\mathcal{N},F(v)_\alpha F(v')_\alpha=\omega^{v'^TM_\alpha v}F(v')_\alpha F(v)_\alpha,
\end{equation}
where \(\mathcal{N}\) is an integer and \(\{\alpha\}=\{1,2,\dots,m\}\) is the set of subsystems. Note that for any product operator \(\hat{O}\), we represent its part in subsystem \(\alpha\) as \(\hat{O}_\alpha\), such that \(\prod_\alpha \hat{O}_\alpha=\hat{O}\). Equivalently, given a generating set of stabilizers \(\left\{g^{(i)}\right\}\), a valid set of subsystem phase matrices \(\left\{M_{\alpha,ij}\right\}\) can be obtained through the module representation of the generators:
\begin{equation}M_{\alpha,ij}=r^{(i)^T}\Omega_\alpha r^{(j)}\end{equation}
where \(r^{(i)}\) is the module element corresponding to the stabilizer \(g^{(i)}\), i.e. \(\sigma_{r^{(i)}}\propto g^{(i)}\), and \(\Omega_\alpha\) is the bilinear form restricted to only subsystem \(\alpha\). In terms of operators, this corresponds to
\begin{equation}g^{(j)}_{\alpha}g^{(i)}_{\alpha}=\omega^{M_{\alpha,ij}}g^{(i)}_{\alpha}g^{(j)}_{\alpha}.\end{equation}
\end{definition}
Note that these matrices are antisymmetric because \(\Omega_\alpha=-\Omega_\alpha^T\), and the sum of phase matrices over all subsystems gives 0, since all stabilizers commute, i.e.
\begin{equation}
M_{\alpha,ii}=0,M_{\alpha}+M_{\alpha}^T=0,\sum_\alpha M_\alpha=0. \label{eq:SPM}
\end{equation} 
It is trivial to see that such matrices are invariant under local Clifford unitaries, and any two stabilizer states that are equivalent up to some local Clifford unitaries must have the same matrices up to some basis transform. These subsystem phase matrices contain all entanglement information of tripartite stabilizer states, which will be proven in a subsequent section. In the simplified case of a pure bipartite stabilizer state, diagonalizing such matrices identifies pairs of generators that are non-commuting in subsystems, revealing the extractable entangled qudit pairs. 

Such matrices are also invariant under some local non-Clifford unitaries. For example, if a stabilizer state \(\ket{\psi}\) has \(Z_1^{p^{n-1}}\) as a stabilizer, the following non-Clifford unitary
\begin{equation}
V_1=\frac{1}{\sqrt{p}}\sum_{j=0}^{p^{n-1}-1}\sum_{k=0}^{p-1}\sum_{l=0}^{p-1}\omega^{p^{n-1}kl}\ket{p^{n-1}k+j}\bra{pj+l}_1
\end{equation}
can transform the state into another stabilizer state without altering the subsystem phase matrix. If we decompose the \(D=p^n\) qudit into \(n\) \(D=p\) qudits with \(p-\)nary representation, this unitary corresponds to applying Hadamard gate on the lowest qudit and then moving it to the highest. We can see that \(V_1^\dagger Z_1^pV_1\ket{\psi}=Z_1\ket{\psi}\) and \(V_1^\dagger X_1V_1\ket{\psi}=X_1^p\ket{\psi}\), thus \(V_1\) scales up \(Z_1\) and scales down \(X_1\), resulting in a new stabilizer state with \(X_1^{p^{n-1}}\) as an extra stabilizer generator. The subsystem phase matrix is unaltered because the scaling cancels out.
In the following sections, we also use the projection of the subsystem phase matrix into the subring \(\mathbb{Z}_p\), \(M'_a\), which is defined as
\begin{equation}M'_{a,ij}=M_{a,ij}\bmod{p}.\end{equation}
\section{Important Lemmas}\label{sec:lemmas}
\begin{lemma}\label{commute}
Given a stabilizer group \(S\) that specifies a unique \(D\)-level qudit pure stabilizer state, any Pauli operator \(g\) that commutes with all stabilizers must be a stabilizer up to some phase, i.e.
\begin{equation}\abs{S}=D^N,\forall s\in S,\left[g,s\right]=0\Rightarrow\exists\phi\text{ s.t. }\omega^\phi g\in S,\end{equation}
where \(\phi\) is an integer for odd \(D\) and integer/half-integer for even \(D\).
\end{lemma}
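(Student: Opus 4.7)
The plan is to evaluate $g$ on the unique state $\ket{\psi}$ stabilized by $S$, show it acts as a scalar $\omega^\phi$ with $\phi$ of the allowed form, and then argue by maximality that $\omega^{-\phi}g$ must lie in $S$.

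For the first two steps, since $[g,s]=0$ for every $s\in S$ we have $s(g\ket{\psi})=gs\ket{\psi}=g\ket{\psi}$, so $g\ket{\psi}$ lies in the common $+1$-eigenspace of $S$. The hypothesis $|S|=D^N$ makes this eigenspace one-dimensional, so $g\ket{\psi}=\lambda\ket{\psi}$ for some $\lambda\in\mathbb{C}$. Writing $g=\omega^{\gamma}\sigma_\mu$ and iterating the multiplication rule yields $\sigma_\mu^k=\omega^{\binom{k}{2}\vec{z}_\mu\cdot\vec{x}_\mu}\sigma_{k\mu}$; at $k=D$ this gives $\sigma_\mu^D=I$ for odd $D$ and $\sigma_\mu^D=\pm I$ for even $D$, so $g^D$ (odd $D$) or $g^{2D}$ (even $D$) is a scalar multiple of the identity. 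Comparing with $g^k\ket{\psi}=\lambda^k\ket{\psi}$ then forces $\lambda=\omega^\phi$, with $\phi$ an integer for odd $D$ and an integer or half-integer for even $D$.

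Setting $g'=\omega^{-\phi}g$ gives $g'\ket{\psi}=\ket{\psi}$. I would then consider $S':=\langle S,g'\rangle$: it is an abelian Pauli subgroup because $g'$ commutes with all of $S$, and any identity-proportional element $\omega^c I\in S'$ would have to act both as the scalar $\omega^c$ and as the identity on the stabilized state $\ket{\psi}$, forcing $c\equiv 0$. Hence $S'$ is a legitimate stabilizer group, and it stabilizes $\ket{\psi}$. Its projector $P_{S'}=|S'|^{-1}\sum_{s\in S'}s$ satisfies $\tr P_{S'}\geq 1$ (its range contains $\ket{\psi}$) and $\tr P_{S'}=D^N/|S'|$ by the Pauli-orthogonality identity $\tr\sigma_\mu=D^N\delta_{\mu,0}$, so $|S'|\leq D^N=|S|$. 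Combined with $|S|\mid|S'|$ coming from $S\subseteq S'$, this yields $S'=S$, hence $\omega^{-\phi}g\in S$, which is the claim (up to renaming $\phi\mapsto-\phi$).

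I expect the main source of friction to be the second step: the half-integer book-keeping in even $D$, together with the $(-1)^{\vec{z}_\mu\cdot\vec{x}_\mu}$ factor coming out of $\sigma_\mu^D$, has to be tracked carefully so that $\phi$ lands exactly in the integer/half-integer range promised by the statement. The first and third steps reduce to short one-liners once the projector trace identity $\tr P_S=D^N/|S|$ and the uniqueness of the state stabilized by $|S|=D^N$ are in hand.
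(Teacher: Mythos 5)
Your proof is correct, and it takes a genuinely different route to the membership step than the paper does. Both arguments begin identically: since \(g\) commutes with every element of \(S\) and \(\abs{S}=D^N\) makes the common \(+1\) eigenspace one-dimensional, \(g\) acts on \(\ket{S}\) as a unimodular scalar. From there the paper is more direct: it expands \(\expvalOP{S}{g}{S}=D^{-N}\sum_{s\in S}\tr\left(gs^\dagger\right)\) and uses trace orthogonality of Pauli operators, so the nonvanishing of this quantity forces exactly one \(s\in S\) with \(g\propto s\), and the phase quantization comes for free because \(gs^\dagger=\omega^\phi I\) is itself a Pauli operator, which is only possible for integer (odd \(D\)) or integer/half-integer (even \(D\)) \(\phi\). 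You instead (i) pin down the eigenvalue via the order of \(g\) — \(g^D=I\) for odd \(D\) and \(g^{2D}=I\) for even \(D\), using \(\sigma_\mu^D=(-1)^{\vec{z}_\mu\cdot\vec{x}_\mu}I\) in the even case, which settles exactly the half-integer bookkeeping you flagged as the likely friction point — and (ii) obtain membership by a maximality count: \(\langle S,\omega^{-\phi}g\rangle\) is again a legitimate stabilizer group fixing \(\ket{S}\), its projector has trace \(D^N/\abs{\langle S,\omega^{-\phi}g\rangle}\geq 1\), so the enlarged group can be no bigger than \(S\) and must equal it. Both versions rest on the same two ingredients (the projector formula and Pauli trace orthogonality), but the paper applies them in a single computation that yields membership and the phase constraint simultaneously, while your version is slightly longer and in exchange makes the phase quantization explicit from the group structure and isolates a reusable fact, namely that a stabilizer group of size \(D^N\) is maximal among abelian Pauli subgroups containing no nontrivial phase times the identity.
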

\begin{proof}
\begin{equation}
g\projector{S}g^\dagger=g\left(D^{-N}\sum_{s\in S}s\right)g^\dagger=D^{-N}\sum_{s\in S}gsg^\dagger=D^{-N}\sum_{s\in S}s=\projector{S}.
\end{equation}
Therefore, \(g\ket{S}\propto\ket{S}\) and \(\abs{\expvalOP{S}{g}{S}}=1\). To show that \(\phi\) must be an integer/half-integer, consider
\begin{equation}
\expvalOP{S}{g}{S}=\tr\left(g\projector{S}\right)=D^{-N}\sum_{s\in S}\tr{gs^\dagger}.
\end{equation}
Since both \(g\) and \(s\) are elements of the Pauli group, \(\tr{gs^\dagger}=0\) if \(g\not\propto s\) and \(\tr{gs^\dagger}=D^N\omega^{\phi}\) if \(g=\omega^\phi s\). Because there must not be two elements that differ only by a phase in a stabilizer group, there must be one and only one \(s\) that is proportional to \(g\), and \(gs^\dagger\) is proportional to identity. Since \(\phi\) is an integer/half-integer if and only if \(\omega^\phi I\) is a Pauli, it must be true.
\end{proof}
\begin{lemma}\label{extract}
If a \(D=p^n\) pure stabilizer state is stabilized by \(\prod_{r\in R}X_r\) and pairwise \(Z_r^{p^{n-n'}}Z_{r'}^{-p^{n-n'}}\) over some number of qudits \(R\), where \(\abs{R}\geq2\) and \(1\leq n'\leq n\), then there is some local unitary operation that extracts a \(D'=p^{n'}\) \(\abs{R}\)-qudit GHZ/EPR state.
\begin{equation}
X_r^{\otimes_{r\in R}}\in S,\forall r,r'\in R,Z_r^{p^{n-n'}}Z_{r'}^{-p^{n-n'}}\in S\Rightarrow\exists\left\{U_r\right\}\text{ s.t. }\left(\bigotimes_{r\in R}U_r\right)\left(\ket{S}\otimes\ket{0}_{p}^{\otimes r\in R}\right)=\ket{\tilde{S}}\otimes\ket{GHZ}_{p^{n'}},
\end{equation}
where
\begin{equation}\tilde{S}=\bar{S}\otimes\left<Z_{r_1}^{p^{n-n'}}\right>,\end{equation}
\begin{equation}\bar{S}=\left\{s\in S\middle|\left[s,Z_r^{p^{n-n'}}\right]=0\right\},\end{equation}
\begin{equation}\ket{GHZ}_{p^{n'}}=p^{-n'/2}\sum_{k=0}^{p^{n'}-1}\ket{k}_r^{\otimes r\in R}.\end{equation}
Note that if \(\left|R\right|=2\), the extracted state is an EPR pair, otherwise it is a GHZ state.
\end{lemma}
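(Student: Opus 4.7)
The plan is to construct local unitaries $U_r$ that peel off the $p^{n'}$-level GHZ structure encoded by the given stabilizers, using each ancilla $\ket{0}_p$ to absorb the extracted degree of freedom. The key observation is that $Z_r^{p^{n-n'}}$ has $p^{n'}$ distinct eigenvalues $e^{2\pi i m/p^{n'}}$ with $p^{n-n'}$-dimensional eigenspaces $\mathcal{H}_m^{(r)}$; the pairwise stabilizers $Z_r^{p^{n-n'}}Z_{r'}^{-p^{n-n'}}$ correlate the label $m$ across $r\in R$, while $\prod_r X_r$ cyclically maps $\mathcal{H}_m^{(r)}\to\mathcal{H}_{m+1}^{(r)}$ on each qudit. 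Together these imply $\ket{S}=p^{-n'/2}\sum_m\ket{\phi_m}$ with each $\ket{\phi_m}$ supported in the joint $m$-eigenspace, exhibiting the GHZ-like structure to be extracted.

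The core construction picks any isomorphism $I_0:\mathcal{H}_0^{(r)}\otimes\mathcal{H}_{\mathrm{anc}}^{(r)}\to\mathcal{H}_{\mathrm{rest}}^{(r)}$ and propagates it to other eigenspaces via $I_m:=I_0\circ(X_r^{-m}\otimes I_{\mathrm{anc}})$, so that the encoding respects the $X_r$ cycling. Then define
\begin{equation}
U_r\ket{v}_m\ket{a}_{\mathrm{anc}}=I_m(\ket{v}_m\otimes\ket{a}_{\mathrm{anc}})\otimes\ket{m}_{\mathrm{GHZ},r}.
\end{equation}
A direct calculation yields $U_rX_rU_r^\dagger=I_{\mathrm{rest}}\otimes X_{\mathrm{GHZ},r}$ and $U_rZ_r^{p^{n-n'}}U_r^\dagger=I_{\mathrm{rest}}\otimes Z_{\mathrm{GHZ},r}$, so that $\prod_r X_r$ and $Z_r^{p^{n-n'}}Z_{r'}^{-p^{n-n'}}$ are mapped exactly to the canonical stabilizers of $\ket{\mathrm{GHZ}}_{p^{n'}}$. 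A convenient choice of $I_0$ (a natural basis re-encoding) additionally makes $U_rZ_{\mathrm{anc},r}U_r^\dagger$ equal to the $p^{n-n'}$-th power of the $Z$ operator on the rest qudit, supplying the extra generator $Z_{r_1}^{p^{n-n'}}$ in $\tilde S$.

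Finally, I would verify that every $s\in\bar S$ conjugates to an operator that acts trivially on the GHZ factor. Since $[s_r,Z_r^{p^{n-n'}}]=0$ forces the $X$-component of $s_r$ to be a multiple of $p^{n'}$, we can write $s_r\propto X_r^{p^{n'}a'_r}Z_r^{b_r}$; using $X_r^{-m}s_rX_r^m=\omega^{b_rm}s_r$ gives $U_rs_rU_r^\dagger=\tilde s_r\otimes\omega^{b_r\hat m}$, where $\hat m$ is diagonal on the GHZ basis and $\tilde s_r$ is the induced operator on the rest qudit. The crucial algebraic step is that $[\prod_{r\in R}X_r,s]=\omega^{-\sum_r b_r}$ together with $s\in S$ forces $\sum_{r\in R}b_r\equiv 0\pmod{p^n}$, so the product of these phases collapses to $1$ on each $\ket{m}^{\otimes R}$ and the GHZ factor is undisturbed. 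The main obstacle is exactly this bookkeeping---reconciling the induced rest-space operators and the conjugated ancilla stabilizer with the claimed structure $\tilde S=\bar S\otimes\langle Z_{r_1}^{p^{n-n'}}\rangle$---which is the most delicate part of the construction.
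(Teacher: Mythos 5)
Your overall strategy---grade each qudit in \(R\) by the eigenvalue of \(Z_r^{p^{n-n'}}\), use the pairwise \(Z\) stabilizers to correlate the labels and \(\prod_{r\in R}X_r\) to cycle them, then re-encode each (qudit \(\otimes\) ancilla) pair so the label becomes an explicit GHZ register---is essentially the paper's proof, which does the same thing in the computational basis via the lowest \(n'\) \(p\)-nary digits and then swaps the GHZ factor onto the ancillas. However, the step you rely on to finish is wrong: the identity \(U_rX_rU_r^\dagger=I_{\mathrm{rest}}\otimes X_{\mathrm{GHZ},r}\) is false whenever \(n'<n\). A quick sanity check: it would imply \(U_rX_r^{p^{n'}}U_r^\dagger=I_{\mathrm{rest}}\otimes X_{\mathrm{GHZ},r}^{p^{n'}}=I\), impossible since \(X_r^{p^{n'}}\neq I\). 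Concretely, \(X_r\) maps \(\mathcal{H}_{p^{n'}-1}^{(r)}\) to \(\mathcal{H}_{0}^{(r)}\), and on that wrap-around your bookkeeping produces \(I_0\bigl(X_r\ket{v}\otimes\ket{a}\bigr)\) whereas \(\bigl(I\otimes X_{\mathrm{GHZ},r}\bigr)U_r\) produces \(I_0\bigl(X_r^{-(p^{n'}-1)}\ket{v}\otimes\ket{a}\bigr)\); these differ by \(X_r^{p^{n'}}\) acting inside \(\mathcal{H}_0^{(r)}\), and no choice of \(I_0\) removes it. The true conjugate is \(X_{\mathrm{GHZ},r}\) composed with a correction, controlled on the label \(p^{n'}-1\), in which the rest register picks up the re-encoded \(X_r^{p^{n'}}\) (this ``carry'' is precisely why the paper's leftover group \(\bar{S}\) contains \(\prod_{r}X_r^{p^{n'}}\)). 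Hence \(\prod_rX_r\) is \emph{not} mapped to the canonical GHZ stabilizer, and your route to concluding that the image contains an exact \(\ket{GHZ}_{p^{n'}}\) factor does not go through as written.

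The construction is salvageable, but by a direct computation rather than the claimed operator identities: writing \(\ket{S}=p^{-n'/2}\sum_{m=0}^{p^{n'}-1}\ket{\phi_m}\) with \(\ket{\phi_m}=\bigl(\prod_{r\in R}X_r\bigr)^m\ket{\phi_0}\), and noting that on the joint \(m\)-eigenspace \(\bigotimes_rU_r\) acts as \(\bigl(\bigotimes_rI_0\bigr)\circ\bigl(\prod_{r\in R}X_r\bigr)^{-m}\) tensored with \(\ket{m}^{\otimes R}\), you get \(\bigl(\bigotimes_rU_r\bigr)\bigl(\ket{S}\otimes\ket{\vec{0}}\bigr)=\bigl[\bigl(\bigotimes_rI_0\bigr)\bigl(\ket{\phi_0}\otimes\ket{\vec{0}}\bigr)\bigr]\otimes p^{-n'/2}\sum_m\ket{m}^{\otimes R}\), which is exactly the paper's computational-basis argument in disguise. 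Two smaller points: your \(I_0\) requires a \(p^{n'}\)-dimensional ancilla per qudit (as in the paper's proof body), not the \(p\)-dimensional one quoted in the statement; and the identification \(\tilde{S}=\bar{S}\otimes\langle Z_{r_1}^{p^{n-n'}}\rangle\), which you flag as the delicate part, still needs to be carried out, although your phase bookkeeping for \(\bar{S}\) (that \(\sum_{r\in R}b_r\equiv 0 \bmod p^{n}\) kills the diagonal phases on the GHZ support) is correct.
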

\begin{proof}
Write the state as a summation over the lowest \(n'\) sub-\(p\)-it of the \(\abs{R}\) involved qudits in the \(p\)-nary representation
\begin{equation}\ket{S}=\sum_{\vec{v}}w_{\vec{v}}\ket{\varphi_{\vec{v}}}\otimes\ket{\vec{v}}.\end{equation}
For the state to be stabilized by the pairwise \(Z_r^{p^{n-n'}}Z_{r'}^{-p^{n-n'}}\), only terms with \(\vec{v}\propto\vec{1}\) can be nonzero, where \(\vec{1}=(1,1,\dots,1)^T\) is the \(\abs{R}\)-dimensional vector/module element of all \(1\)s, i.e.
\begin{equation}\ket{S}=\sum_{j=0}^{p^{n'}-1}w_{j\vec{1}}\ket{\varphi_{j\vec{1}}}\otimes\ket{j\vec{1}},\end{equation}
where \(j\vec{1}=(j,j,\dots,j)^T\). \(\prod_{r\in R}X_r\) further limits all weights and state with different \(j\) to be same, i.e.
\begin{equation}\ket{S}=\sum_{j=0}^{p^{n'}-1}w_{\vec{0}}\ket{\varphi_{\vec{0}}}\otimes\ket{j\vec{1}},\end{equation}
where \(\vec{0}=(0,0,\dots,0)^T\). Hence,
\begin{equation}\ket{S}=\ket{\varphi_{\vec{0}}}\otimes\ket{GHZ}_p.\end{equation}
It is clear that every stabilizer can be split into two parts: one operating on \(\ket{\varphi}\) and one operating on the GHZ/EPR state. In other words,
\begin{equation}
S=\bar{S}\otimes\left\{\prod_{r\in R}X_r^j\mathrel{\bigg|}0\leq j\leq p^{n'}-1\right\}.
\end{equation}
Note that \(\bar{S}\) includes \(\prod_{r\in R}X_r^{p^{n'}}\) and \(Z_r^{p^{n-n'}}Z_{r'}^{-p^{n-n'}}\). Swapping the GHZ/EPR state with \(\ket{0}_{p^{n'}}^{\otimes r\in R}\) gives \(\ket{\varphi_{\vec{0}}}\otimes\ket{0}_{p^{n'}}^{\otimes r\in R}\), and it is trivial to show that this state is stabilized by \(\tilde{S}\).
\end{proof}
Note that we added \(\ket{0}_p^{n'}\) states to effectively store GHZ/EPR states, but in principle the procedure can be done in place by performing analysis of Pauli groups of a mixture of \(D=p,p^2,\dots,p^{n}\) qudits using symplectic modules~\cite{symplectic}, and update the definition of Pauli group and Clifford group between every step.
\section{Main results}\label{sec:main}
We present the main corollary, stating that all \(D=p^n\) tripartite pure stabilizer states can be reduced to \(D=p\) GHZ states, EPR pairs, and unentangled qudits with local unitaries, followed by the two theorems that enables the corollary. These theorems hold generally for \(m\)-partite stabilizer states, with \(m=3\) corresponding to the tripartite case.
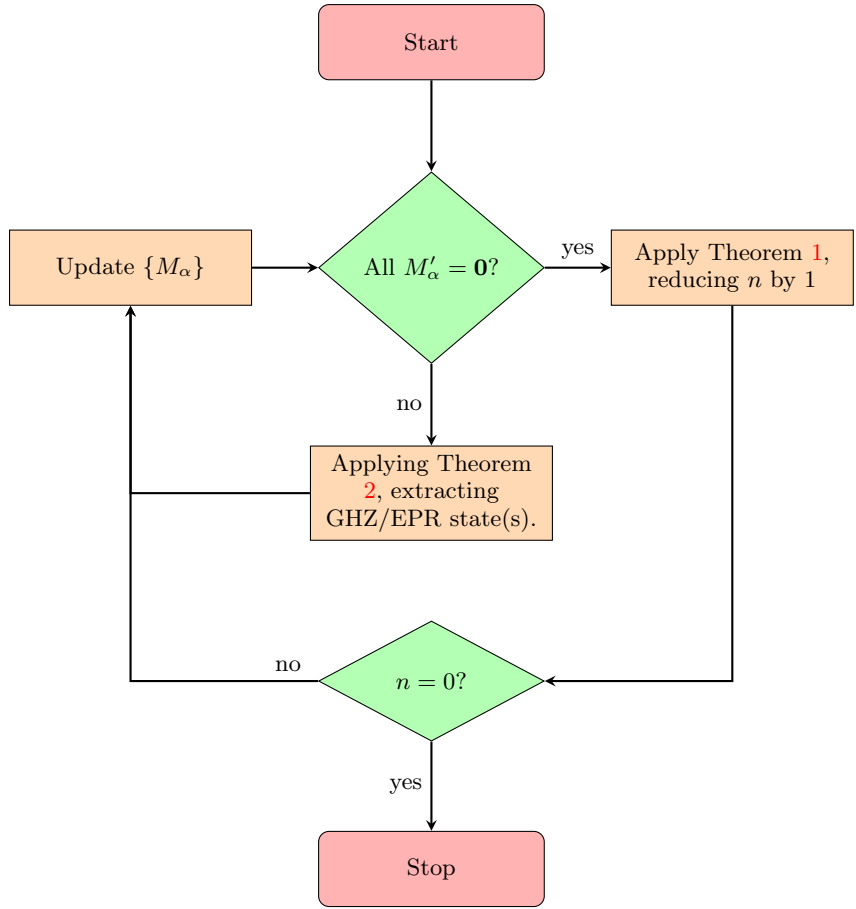
\begin{figure}
    \centering

    \begin{tikzpicture}[node distance=2cm]
    
    \node (start) [startstop] {Start};
    \node (checkm) [decision, below of=start, yshift=-1cm] {All \(M'_\alpha=\bold{0}\)?};
    \node (updatem) [process, left of=checkm, xshift=-2cm] {Update \(\left\{M_\alpha\right\}\)};
    \node (extract) [process, below of=checkm, yshift=-1cm] {Applying Theorem \ref{thm:2}, extracting GHZ/EPR state(s).};
    \node (reduce) [process, right of=checkm, xshift=2cm] {Apply Theorem \ref{thm:1}, reducing \(n\) by 1};
    \node (checkn) [decision, below of=extract, yshift=-0.5cm] {\(n=0\)?};
    \node (stop) [startstop, below of=checkn, yshift=-0.5cm] {Stop};
    
    \draw [arrow] (start) -- (checkm);
    \draw [arrow] (updatem) -- (checkm);
    \draw [arrow] (checkm) -- node[anchor=east] {no} (extract);
    \draw [arrow] (checkm) -- node[anchor=south] {yes} (reduce);
    \draw [arrow] (checkn) -| node[anchor=west, yshift=0.2cm, xshift=1.8cm] {no} (updatem);
    \draw [arrow] (checkn) -- node[anchor=east] {yes} (stop);
    \draw [arrow] (reduce) |- (checkn);
    \draw [arrow] (extract) -| (updatem);
    
    \end{tikzpicture}
    \caption{Flowchart of entanglement extraction}
    \label{fig:flowchart}
\end{figure}
\begin{corollary}
There exist some local unitaries that convert a tripartite pure \(D=p^n\) stabilizer state into a tensor product of GHZ states, EPR pairs, and unentagled qudits over \(p\)-level systems.
\begin{multline*}
\exists U_a\in \mathbb{U}_a,U_b\in \mathbb{U}_b,U_c\in \mathbb{U}_c\text{ s.t. }U_a\otimes U_b\otimes U_c\ket{S}\\=\ket{GHZ}_p^{\otimes N_{GHZ}}\otimes\ket{EPR}_{p,ab}^{\otimes N_{ab}}\otimes\ket{EPR}_{p,ac}^{\otimes N_{ac}}\otimes\ket{EPR}_{p,bc}^{\otimes N_{bc}}\otimes\ket{0}_{p,a}^{\otimes N_a}\otimes\ket{0}_{p,b}^{\otimes N_b}\otimes\ket{0}_{p,c}^{\otimes N_c}.
\end{multline*}
\end{corollary}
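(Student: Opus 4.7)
The plan is to implement the iterative procedure depicted in Figure~\ref{fig:flowchart} and argue termination by induction on \(n\), with an inner monovariant on the stabilizer structure. Given a tripartite \(D=p^n\) stabilizer state, I first compute its subsystem phase matrices \(\{M_\alpha\}\) and their mod-\(p\) reductions \(\{M'_\alpha\}\). The argument then splits on whether every \(M'_\alpha\) vanishes, matching the two branches of the flowchart.

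When some \(M'_\alpha\) is nonzero, I invoke Theorem~\ref{thm:2}. After suitable local Clifford manipulations of the generating set, the nonvanishing entries of \(M'_\alpha\) correspond to pairs of generators whose subsystem parts fail to commute at the lowest \(\mathbb{Z}_p\) level. I would show that these generators can be brought into the canonical form required by Lemma~\ref{extract}: a stabilizer \(\prod_{r\in R}X_r\) together with pairwise \(Z_r^{p^{n-n'}}Z_{r'}^{-p^{n-n'}}\). Lemma~\ref{extract} then peels off a \(D=p^{n'}\) GHZ or EPR state on \(\abs{R}\) parties. If \(n'>1\), the extracted state is itself a prime-power stabilizer state, which I would further decompose into \(p\)-level GHZ/EPR states by iterating Lemma~\ref{extract} within the extracted qudit block; the \(V_1\)-type non-Clifford unitaries from Section~\ref{sec:spm} make this straightforward, since a \(D=p^{n'}\) GHZ state unfolds into \(n'\) copies of a \(D=p\) GHZ after appropriate local reshuffling.

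When every \(M'_\alpha\) vanishes, all pairs of generators commute modulo \(p\) in each subsystem, and Theorem~\ref{thm:1} supplies local unitaries (again of \(V_1\) type) that strip off an unentangled \(p\)-level qudit from every party and leave behind an effective \(D=p^{n-1}\) tripartite stabilizer state on the remaining register. The induction hypothesis on \(n\) then applies to the remainder, and the base case \(n=0\) is vacuous.

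The main obstacle is proving termination of the inner extract-and-update loop: repeated application of Theorem~\ref{thm:2} followed by recomputation of \(\{M_\alpha\}\) must eventually drive every \(M'_\alpha\) to zero so that Theorem~\ref{thm:1} can be invoked and \(n\) actually decreases. I expect to control this via a monovariant such as the total \(\mathbb{Z}_p\)-rank of \(\{M'_\alpha\}\) or the number of generators carrying nontrivial \(\mathbb{Z}_p\) commutation content, showing that each extraction strictly decreases it without creating new non-commuting structure elsewhere. A secondary subtlety is verifying that the local unitaries assembled across successive extractions remain consistent on each party, i.e.\ that previously extracted GHZ/EPR blocks are untouched when Theorem~\ref{thm:1} subsequently acts on the remaining register; this should reduce to checking that the \(V_1\)-style transformations act as identity on the already-decoupled ancillary \(\ket{0}_p\) registers introduced in Lemma~\ref{extract}.
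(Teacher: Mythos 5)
There is a genuine gap at the heart of your plan: you assume that whenever some \(M'_\alpha\neq\bold{0}\) you may ``invoke Theorem~\ref{thm:2},'' but the hypothesis of Theorem~\ref{thm:2} is not ``some \(M'_\alpha\neq\bold{0}\).'' It requires exhibiting an order-\(p^n\) module element \(\vec{v}\) with \(p^{n-n'}\vec{v}\) lying in the intersection of the spans of the phase matrices of at least two parties and \(p^{n-1}\vec{v}\) outside the span of their sum. The bulk of the paper's proof of this corollary is devoted precisely to that point: it establishes a trichotomy --- either all \(M'_\alpha=\bold{0}\) (Theorem~\ref{thm:1} applies), or \(\bigcap_\alpha\spn(M_\alpha)\neq\{\bold{0}\}\) (GHZ extraction via Corollary~\ref{cor:2}), or there is a \(\vec{v}\) in the span of two matrices with \(p^{n-1}\vec{v}\) not in the span of the third (EPR extraction via Corollary~\ref{cor:3}) --- and proving that one of these must hold when the first two fail requires the explicit construction of an invertible basis change \(L\) that simultaneously puts \(M_a\) and \(M_c\) (hence \(M_b\), by \(\sum_\alpha M_\alpha=0\)) into a normal form, producing a vector \(\vec{v}'\in\spn(M_a)\cap\spn(M_b)\) with \(p^{n-1}\vec{v}'\notin\spn(M_c)\). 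Without this argument your flowchart can stall: a nonzero \(M'_\alpha\) blocks Theorem~\ref{thm:1}, yet you have not shown any \(\vec{v}\) meeting Theorem~\ref{thm:2}'s conditions. In the prime-power setting the non-commutation visible in party \(a\) can be compensated by parties \(b\) and \(c\) at different orders of \(p\), so locating a suitable common vector is exactly the nontrivial step; your statement that the offending generators ``can be brought into the canonical form required by Lemma~\ref{extract}'' presupposes this combinatorial fact about the three phase matrices rather than proving it (and the passage to Lemma~\ref{extract}'s form is itself the content of Theorem~\ref{thm:2}, not something to redo inside the corollary).

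Two secondary points. Your concern about termination of the extract-and-update loop is reasonable, but the paper treats it lightly: each application of Theorem~\ref{thm:2} removes entanglement from a state with finitely many qudits, so extractions cannot recur indefinitely, and \(n\) strictly decreases each time Theorem~\ref{thm:1} fires; a rank-type monovariant on \(\{M'_\alpha\}\) would be a fine way to make this rigorous but is not where the difficulty lies. Also, in the all-\(M'_\alpha=\bold{0}\) branch, Theorem~\ref{thm:1} supplies local \emph{Clifford} unitaries; the non-Clifford \(V_1\)-type maps are used inside the proof of Theorem~\ref{thm:2}, not in the reduction of \(n\).
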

\begin{proof}
The subsystem phase matrices of any tripartite pure \(D=p^n\) stabilizer state must satisfy at least one of the following conditions:
\begin{enumerate}
  \item All projected subsystem phase matrices are trivial.
  \begin{equation}
\forall\alpha,M'_\alpha=\bold{0}
  \end{equation}
  \item The intersection of the spans of all subsystem phase matrices is nontrivial.
  \begin{equation}
\bigcap_\alpha\spn(M_\alpha)\neq\{\bold{0}\}
  \end{equation}
  \item There exists at least one \(\vec{v}\) such that \(\vec{v}\) is in the span of two subsystem phase matrices and \(p^{n-1}\vec{v}\) is not in the span of the remaining subsystem phase matrix.
  \begin{equation}
\exists\alpha,\vec{v}\text{ s.t. }\vec{v}\in\bigcap_{\beta\neq\alpha}\spn(M_\beta),p^{n-1}\vec{v}\notin\spn(M_\alpha)
  \end{equation}
\end{enumerate}
These conditions correspond to Theorem \ref{thm:1}, Corollary \ref{cor:2}, and Corollary \ref{cor:3} respectively. Theorem \ref{thm:1} allows reduction of \(n\) by \(1\), while Corollary \ref{cor:2} and \ref{cor:3}, implications of Theorem \ref{thm:2}, allow extraction of entanglement in the form of \(D=p\) GHZ/EPR state(s). We can then follow the entanglement extraction flowchart shown in Fig. \ref{fig:flowchart} to iteratively reduce \(n\) while extracting GHZ/EPR state(s).

To show that at least one of the conditions must hold, consider the situation where conditions 1 and 2 are false, i.e. at least one projected subsystem phase matrix is nontrivial, and the intersection of all subsystem phase matrices is trivial. Without loss of generality, assume \(M'_a\neq\bold{0}\), let \(\vec{\bar{v}}\) be a nonzero column of \(M'_a\), and \(\vec{v}\) be the corresponding column of \(M_a\). Since \(\vec{\bar{v}}\) is a nontrivial element, \(\vec{v}\) must be of order \(p^n\). Since \(p^{n-1}\vec{v}\in\spn(M_a)\), condition 2 being false implies that \(p^{n-1}\vec{v}\) must be absent from at least one of \(\spn \left(M_b\right)\) or \(\spn \left(M_c\right)\), without loss of generality assume \(p^{n-1}\vec{v}\notin\spn \left(M_c\right)\). Then, there exists invertible \(\mathcal{N}\times\mathcal{N}\) matrix \(L\) such that
\begin{subequations}
\begin{align}
L\vec{v}&=\begin{pmatrix}
1\\
\hline
0\\
\vec{0}
\end{pmatrix},\label{lv}\\
LM_cL^T&=
\left(\begin{array}{@{}c|c@{}}
0&\begin{matrix}0&\vec{0}^T\end{matrix} \\
\hline
\begin{matrix}0\\\vec{0}\end{matrix}&\widetilde{M}_c
\end{array}\right).\label{lcl}\\
LM_aL^T&=
\left(\begin{array}{@{}c|c@{}}
0&\begin{matrix}1&\vec{0}^T\end{matrix} \\
\hline
\begin{matrix}-1\\\vec{0}\end{matrix}&\widetilde{M}_a
\end{array}\right),\label{lal}\\
LM_bL^T&=
\left(\begin{array}{@{}c|c@{}}
0&\begin{matrix}-1&\vec{0}^T\end{matrix} \\
\hline
\begin{matrix}1\\\vec{0}\end{matrix}&-\widetilde{M}_a-\widetilde{M}_c
\end{array}\right).\label{lbl}
\end{align}
\end{subequations}
We justify the existence of \(L\) in the following four steps:
(1) For a vector \(\vec{v}\) of order \( p^n \), there exists a matrix \(L^{(1)}\) that satisfies Eq.~\ref{lv}.
(2) Since \( p^{n-1} \vec{v} \notin \spn(M_c) \), we can eliminate the entries in the first row and first column of \( M_c \) to obtain \(L^{(2)}\) that satisfies Eq.~\ref{lcl} by applying similarity transformations over rows and columns, while preserving the structure of Eq.~\ref{lv}.
(3) Because \( \vec{v} \in \spn(M_a) \), the first row and first column of \( M_a \) remain order \(p^n\) after the above transformation. By performing additional similarity transformations, we can obtain \(L^{(3)}\) that satisfies Eq.~\ref{lal}, while maintaining the structure of Eqs.~\ref{lv} and \ref{lcl}.
(4) Finally, we obtain Eq.~\ref{lbl} using Eq.~\ref{eq:SPM}.

According to Eqs.~\ref{lal} and \ref{lbl}, we have \(\vec{v}'=L^{-1}
\begin{pmatrix}
0\\
\hline
1\\
\vec{0}
\end{pmatrix}\) is in the span of both \(M_a\) and \(M_b\). Due to \(p^{n-1}\vec{v}'\in M_a\cap M_b\), condition 2 being false implies \(p^{n-1}\vec{v}'\notin M_c\), thus \(\vec{v}'\) must satisfy condition 3. Therefore, for any \(n\geq 1\), the subsystem phase matrices must satisfy at least one condition. By iteratively applying the theorems according to the flowchart in Fig. \ref{fig:flowchart}, any tripartite stabilizer state can be decomposed into \(p\)-level GHZ states, EPR pairs, and unentangled qudits. Since entanglement cannot be infinite for a state with finite qudits, eventually we will reach a product state, i.e. \(M_a=M_b=M_c=0\), with \(n=1\), which we can rotate all qudits to \(\ket{0}_p\). Ancillary qudits introduced in Theorem \ref{thm:2} are not required to extract the entanglement but simplifies the process. Note that although the unitaries involved are always Cliffords for some power of \(p\), the combination is generally not a \(D=p^n\) Clifford unitary.
\end{proof}
\begin{theorem}\label{thm:1}
If \(\forall \alpha,M'_\alpha=0\), then the \(m\)-partite stabilizer state is local-Clifford equivalent to a product of a \(D'=p^{n-1}\) stabilizer state and unentangled \(D''=p\) qudits.

\begin{equation}
\forall \alpha,M_\alpha\equiv0\left(\bmod{\;p}\right)\Rightarrow\exists U_\alpha\in\mathbb{C}_\alpha\text{ s.t. }\left(\bigotimes_\alpha U_\alpha\right)\ket{S}=\ket{\tilde{S}'}_{p^{n-1}}\otimes\ket{0}_p^{\otimes N}.
\end{equation}
\end{theorem}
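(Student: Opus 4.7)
The plan is to use the hypothesis \(M'_\alpha=0\) as an isotropy condition modulo \(p\) and then exhibit explicit local Cliffords that (i) make the \(X\)-exponents of every generator divisible by \(p\), (ii) promote each \(Z_k^{p^{n-1}}\) to an honest stabilizer, and (iii) collapse the resulting \(p\)-level factor on each qudit. Writing generators as \(g^{(i)}\propto\sigma_{r^{(i)}}\), the identity \(M_{\alpha,ij}=r^{(i)T}\Omega_\alpha r^{(j)}\) rephrases the hypothesis as: the mod-\(p\) reductions \(\bar{r}^{(i)}_\alpha\in\mathbb{F}_p^{2N_\alpha}\) span an isotropic subspace of the standard symplectic form over the field \(\mathbb{F}_p\). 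Standard symplectic linear algebra embeds this isotropic subspace inside the Lagrangian \(\{(\vec{0},\vec{z})\}\), and the corresponding element of \(\mathrm{Sp}(2N_\alpha,\mathbb{F}_p)\) lifts to \(\mathrm{Sp}(2N_\alpha,\mathbb{Z}_{p^n})\) via the usual phase, Fourier, and controlled-\(Z\) generators, yielding a local Clifford \(U_\alpha\) after which each generator takes the form \(g^{(i)}=\omega^{\gamma_i}\prod_k X_k^{p y_{i,k}}Z_k^{z_{i,k}}\) with every \(X\)-exponent divisible by \(p\).

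Next I would promote \(Z_k^{p^{n-1}}\) to an honest stabilizer for each qudit \(k\). A direct symplectic calculation gives the commutator phase between \(Z_k^{p^{n-1}}\) and any transformed generator as \(\omega^{-p y_{i,k}\cdot p^{n-1}}=\omega^{-p^n y_{i,k}}=1\), so Lemma~\ref{commute} provides \(\phi_k\) with \(\omega^{\phi_k}Z_k^{p^{n-1}}\in S\). The relation \((Z_k^{p^{n-1}})^p=I\) forces its eigenvalue on \(\ket{S}\) to be a \(p\)-th root of unity, hence \(\phi_k=c_k p^{n-1}\), and conjugation by the local Pauli \(X_k^{c_k}\) (itself Clifford) cancels the residual phase. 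Since \(X_k^{c_k}\) only rescales the overall phases of the other generators without touching their \((x,z)\)-content, the \(p\)-divisibility of the \(X\)-exponents survives, so carrying out this adjustment for every \(k\) embeds all \(Z_k^{p^{n-1}}\) into the stabilizer group while preserving the canonical form.

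To conclude I would decouple the trivial \(p\)-level factor. The \(Z_k^{p^{n-1}}\) stabilizers confine \(\ket{S}\) to basis vectors \(\ket{pm_1,\ldots,pm_N}\); under the canonical identification \(\ket{pm}_{p^n}\leftrightarrow\ket{m}_{p^{n-1}}\otimes\ket{0}_p\), the operator \(X_k^p\) acts as \(X_{p^{n-1},k}\) on the first factor and trivially on the second, while \(Z_k^{z}\) acts as \(Z_{p^{n-1},k}^{z\bmod p^{n-1}}\) on the first factor and fixes \(\ket{0}_p\). Imposing \(g^{(i)}\ket{S}=\ket{S}\) constrains each residual scalar phase \(\omega^{\gamma_i}\) to lie in the eigenvalue spectrum of a \(p^{n-1}\)-qudit Pauli (i.e.\ to be a \(p^{n-1}\)-th root of unity), so the first factor carries a bona fide \(p^{n-1}\)-level stabilizer state \(\ket{\tilde{S}'}\) and the full state factorizes as \(\ket{\tilde{S}'}_{p^{n-1}}\otimes\ket{0}_p^{\otimes N}\).

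The main obstacle is the very first step: exhibiting, on each party, a single local Clifford that simultaneously zeros the mod-\(p\) \(X\)-exponents of every restricted generator. This hinges on the symplectic map found over \(\mathbb{F}_p\) lifting to an actual Clifford over \(\mathbb{Z}_{p^n}\); the isotropic-to-Lagrangian argument is standard linear algebra, but making the lift explicit with the usual Clifford generators at modulus \(p^n\) requires some bookkeeping. Once this uniform canonicalization is in place, Lemma~\ref{commute} together with the eigenvalue argument carries the proof through mechanically.
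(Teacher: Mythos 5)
Your proposal is correct and follows essentially the same route as the paper: bring all generators to products of \(Z\)s and \(X^p\)s by a local Clifford (the paper phrases this as rotating the mod-\(p\) projected stabilizer group to \(Z\)-only form, which is exactly your isotropic-to-Lagrangian argument plus the symplectic lift from \(\mathbb{F}_p\) to \(\mathbb{Z}_{p^n}\)), then use Lemma~\ref{commute} to promote each \(Z_k^{p^{n-1}}\) to a stabilizer after clearing phases with \(X\)s, and finally factor off \(\ket{0}_p^{\otimes N}\) and reinterpret \(X^p,Z\) as \(p^{n-1}\)-level Paulis. Your write-up is in fact slightly more explicit than the paper's about the symplectic lift and the phase bookkeeping (modulo the minor caveat that for \(p=2\) the residual phases may be \(2p^{n-1}\)-th rather than \(p^{n-1}\)-th roots of unity), so no gap to report.
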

\begin{proof}
Since \(M'_\alpha=0\), projecting the stabilizer group into \(D=p\) Pauli group results in a stabilizer group where every pair of stabilizer commutes in every subsystem, and hence there exists a set of local \(D=p\) Clifford unitaries that transform the projected stabilizer group to consist of only products of \(Z\)s. Hence, there exists some local \(D=p^n\) Clifford that transforms the original stabilizer group to consist of only products of \(Z\)s and \(X^p\)s. In the resultant stabilizer state, \(Z^{p^{n-1}}\) of every single qudit commutes with all stabilizers, and thus, from Lemma \ref{commute}, is a stabilizer up to some phase, which can be removed by applying \(X\)s. As the state is now stabilized by \(Z^{p^{n-1}}\) of individual qudits, if we write the computational basis in \(p\)-nary representation, the lowest \(p\)-dit is 0 for every qudit, i.e. 
\begin{equation}\bigotimes_\alpha U_\alpha\ket{S}=\ket{\tilde{S}}=\ket{\psi}_{p^{n-1}}\otimes\ket{0}^{\otimes N}.\end{equation}
To show that \(\ket{\psi}_{p^{n-1}}\) is a stabilizer state, consider
\begin{equation}X^p\ket{\tilde{S}}=\left(X'\ket{\psi}_{p^{n-1}}\right)\otimes\ket{0}^{\otimes N},\end{equation}
\begin{equation}Z\ket{\tilde{S}}=\left(Z'\ket{\psi}_{p^{n-1}}\right)\otimes\ket{0}^{\otimes N},\end{equation}
and since \(\tilde{S}\) consists of only products of \(Z\)s and \(X^p\)s, replacing \(X^p\) with \(X\) results in a \(D=p^{n-1}\) stabilizer group.
\end{proof}
\begin{theorem}\label{thm:2}
For a pure \(D=p^n\) stabilizer state, if there exists an order-\(p^n\) module element \(\vec{v}\) such that \(p^{n-n'}\vec{v}\) is in the span of \(m'\) subsystem phase matrices, where \(1\leq n'\leq n\), and \(p^{n-1}\vec{v}\) is not in the span of the sum of the \(m'\) subsystem phase matrices, then we can extract \(n'\) copies of \(m'\)-partite \(D=p\) GHZ/EPR state. Note that if \(m'=2\), the extracted state(s) is/are EPR pairs.
\begin{multline}
\exists\mathfrak{B}\subseteq\{\alpha\},\vec{v}\text{ s.t. }\vec{v}\not\equiv\vec{0}\left(\bmod{p}\right),p^{n-n'}\vec{v}\in\bigcap_{\beta\in\mathfrak{B}}\spn \left(M_\beta\right),p^{n-1}\vec{v}\notin\spn\left(\sum_\beta M_{\beta}\right)\\
\Rightarrow\exists U_\beta\in \mathbb{U}_\beta\text{ s.t. }\left(\bigotimes_{\beta\in\mathfrak{B}} U_\beta\right)\ket{S}\otimes\ket{0}_{p^{n'},\mathfrak{B}}=\ket{\tilde{S}}\otimes\ket{GHZ}_{p,\mathfrak{B}}^{\otimes n'}.
\end{multline}
\end{theorem}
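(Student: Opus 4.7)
\emph{Plan.} The approach is to reduce the claim to Lemma~\ref{extract} by constructing, via local Clifford unitaries inside each $\beta\in\mathfrak{B}$, a set of stabilizers that includes the joint $\prod_{\beta\in\mathfrak{B}}X_{r_\beta}$ and pairwise $Z_{r_\beta}^{p^{n-n'}}Z_{r_{\beta'}}^{-p^{n-n'}}$ on designated qudits $r_\beta\in\beta$. Lemma~\ref{extract} then extracts a $D=p^{n'}$ GHZ/EPR state over $R=\{r_\beta:\beta\in\mathfrak{B}\}$, which under the $V_1$-type $p$-nary decomposition of Section~\ref{sec:spm} is exactly $n'$ copies of a $p$-level GHZ/EPR state, matching the statement.

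First I would unpack the algebraic content of the hypotheses. For each $\beta\in\mathfrak{B}$, pick $\vec{u}_\beta$ with $M_\beta\vec{u}_\beta=p^{n-n'}\vec{v}$; the associated stabilizer $h_\beta:=F(\vec{u}_\beta)$ has its $\beta$-restriction $(h_\beta)_\beta$ satisfying $g^{(j)}_\beta(h_\beta)_\beta=\omega^{p^{n-n'}v_j}(h_\beta)_\beta g^{(j)}_\beta$ for every generator $g^{(j)}$. The order-$p^n$ condition on $\vec{v}$ forces at least one $v_j$ to be a unit mod $p$, so the pattern $p^{n-n'}\vec{v}$ has order exactly $p^{n'}$; after possibly redistributing stabilizers that restrict trivially on $\beta$, this implies that $(h_\beta)_\beta$ is ``primitive enough'' inside $\beta$, in the sense that there exists a local Pauli $P_\beta$ on $\beta$ whose commutation with $(h_\beta)_\beta$ is the primitive $p^{n'}$-th root of unity $\omega^{p^{n-n'}}$, matching the $X$-versus-$Z^{p^{n-n'}}$ relation on a single $D=p^n$ qudit.

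Within each $\beta$ I would then apply the symplectic normal-form theorem for the $D=p^n$ Clifford group to produce a local Clifford $U_\beta$ rotating $((h_\beta)_\beta, P_\beta)$ into the canonical pair $(X_{r_\beta}, Z_{r_\beta}^{p^{n-n'}})$ on a chosen qudit $r_\beta\in\beta$, with all other $\beta$-supported stabilizers ending up in the kernel of this symplectic pair --- i.e.\ expressible purely in $X^p, Z^p$ on the remaining qudits of $\beta$, absorbed into a leftover stabilizer group $\bar S$. After doing this in every $\beta\in\mathfrak{B}$, taking products of the $h_\beta$'s across $\beta$ and combining them with stabilizers built from $F(\vec{v})$ (whose $\beta$-restriction now sits on $r_\beta$ up to $\bar S$-absorbable residues), then invoking Lemma~\ref{commute} to upgrade any leftover all-commuting Paulis into stabilizers that can be subtracted, yields both the joint $\prod_\beta X_{r_\beta}$ and the pairwise $Z_{r_\beta}^{p^{n-n'}}Z_{r_{\beta'}}^{-p^{n-n'}}$ required by Lemma~\ref{extract}. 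The non-span hypothesis $p^{n-1}\vec{v}\notin\spn\bigl(\sum_{\beta\in\mathfrak{B}}M_\beta\bigr)$ is precisely the obstruction that prevents any stabilizer supported on $\{\alpha\notin\mathfrak{B}\}$ from cancelling the leading $p$-adic content of the constructed joint $X$-product, guaranteeing that the extracted GHZ has the full level $p^{n'}$ rather than collapsing to a lower level.

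The main obstacle I anticipate is the consistency of these local rotations and the bookkeeping across parties. Within any single $\beta$ the normal-form rotation is a standard exercise, but $h_\beta$ and $F(\vec{v})$ are global stabilizers whose tails in other subsystems transform under every $U_\beta$, and one must verify that, after combining across all of $\mathfrak{B}$, the unwanted residues cancel and leave only the stabilizers that Lemma~\ref{extract} consumes. Pinpointing the exact step at which the non-span hypothesis on $p^{n-1}\vec{v}$ forces this cancellation to go through, and checking that the residual Paulis on qudits outside $R$ are themselves stabilizers (so that they can be eliminated rather than blocking the extraction), is the central technical content of the proof.
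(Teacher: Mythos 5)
Your high-level target is the right one --- the paper also reduces Theorem~\ref{thm:2} to Lemma~\ref{extract} by manufacturing a joint \(X\)-product and pairwise \(Z^{p^{n-n'}}\)-differences on one designated qudit per party --- but two essential ingredients are missing, and one of them makes your Clifford-only plan fail outright. First, you never use the hypothesis \(p^{n-1}\vec{v}\notin\spn\bigl(\sum_{\beta\in\mathfrak{B}}M_\beta\bigr)\) in the way the argument actually needs it: its role is not a vague ``no cancellation of the leading \(p\)-adic content,'' but the existence (by antisymmetry of \(\sum_\beta M_\beta\)) of a vector \(\vec{v}'\in\ker\bigl(\sum_\beta M_\beta\bigr)\) with \(\vec{v}'^T\vec{v}=1\). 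The stabilizer \(g=F(\vec{v}')\) --- not \(F(\vec{v})\), which you invoke --- is the operator whose restriction to the parties in \(\mathfrak{B}\) commutes with every stabilizer's restriction there and anticommutes with each \(h^{(\beta)}_\beta\) by exactly \(\omega^{p^{n-n'}}\); it is this \(g\), rotated so that its \(\beta\)-part is \(X_{\beta,1}\), that Lemma~\ref{commute} upgrades to the stabilizer \(\bigotimes_\beta X_{\beta,1}\), while the pairwise \(Z\)-differences come from \(M_\beta\vec{u}^{(\beta)}-M_{\beta'}\vec{u}^{(\beta')}=0\) plus Lemma~\ref{commute} again. Your scheme, which sends \((h_\beta)_\beta\mapsto X_{r_\beta}\) and hopes that ``products of the \(h_\beta\)'s'' and residue bookkeeping produce both families of stabilizers, leaves precisely this step unproved, as you yourself concede in your final paragraph.

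Second, and more fundamentally, the rotation you propose cannot be done with local Cliffords. The restriction \(h^{(\beta)}_\beta\) has order \(p^{k_\beta}\) with \(n'\le k_\beta\le n\), and Clifford conjugation preserves the order of a Pauli operator (and the subsystem phase matrix), so no symplectic normal form maps it to \(X_{r_\beta}\) (order \(p^n\)), nor to \(Z_{r_\beta}^{p^{n-n'}}\) unless \(k_\beta=n'\). This order mismatch is exactly why the theorem asserts \(U_\beta\in\mathbb{U}_\beta\) rather than \(\mathbb{C}_\beta\), and why the paper's proof, after Clifford-rotating \(h^{(\beta)}_\beta\) to \(Z_{\beta,1}^{p^{n-k_\beta}}\), must first use Eq.~\ref{eq:thm2com3} and Lemma~\ref{commute} to show \(Z_{\beta,1}^{p^{n-k_\beta+n'}}\) is itself a stabilizer and then apply the non-Clifford rescaling unitary \(V_{\beta,1}\) of Section~\ref{sec:spm} \(k_\beta-n'\) times to reach \(Z_{\beta,1}^{p^{n-n'}}\). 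Omitting this non-Clifford step is not a bookkeeping issue: it is the central mechanism that distinguishes the prime-power case from the squarefree case treated in prior work, so a proof restricted to ``local Clifford unitaries inside each \(\beta\)'' cannot close.
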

\begin{proof}
Since \(p^{n-1}\vec{v}\notin\spn\left(\sum_{\beta\in\mathfrak{B}} M_{\beta}\right)\) and \(\sum_{\beta\in\mathfrak{B}} M_{\beta}\) are antisymmetric, there exists \(\vec{v}'\in\ker\left(\sum_{\beta\in\mathfrak{B}} M_{\beta}\right)\) such that \(\vec{v}'^T\vec{v}=1\). There also exists \(\vec{u}^{(\beta)}\) such that \(M_{\beta}\vec{u}^{(\beta)}=p^{n-n'}\vec{v}\). We now have
\begin{equation}\label{eq:thm2com}
\left(\sum_{\beta\in\mathfrak{B}} M_{\beta}\right)\vec{v}'=\vec{0},
\end{equation}
\begin{equation}
\vec{v}'^TM_\beta\vec{u}^{(\beta)}=p^{n-n'},
\end{equation}
\begin{equation}\label{eq:thm2com3}
p^{n'}M_\beta\vec{u}^{(\beta)}=0.
\end{equation}
Denote the stabilizers corresponding to \(\vec{v}'\) and \(\vec{u}^{(\beta)}\) as \(g\) and \(h^{(\beta)}\) respectively, and let \(k_\beta\) be the minimal integer satisfying \(h^{(\beta)^{p^{k_\beta}}}_{\beta}\propto I\). Then, there exists local Clifford that rotates \(h^{(\beta)}_\beta\) to
\begin{equation}
h'^{(\beta)}_\beta\propto Z_{\beta,1}^{p^{n-k_\beta}}.
\end{equation}
where \(Z_{\beta,1}\) indicates a clock operator on the first qubit of the subsystem \(\beta\). If \(k_\beta>n'\), then Eq. \ref{eq:thm2com3} indicates \(Z_{\beta,1}^{p^{n-k_\beta+n'}}\) commutes with all stabilizers, and hence is a stabilizer up to some phase. Apply suitable \(X_{\beta,1}\) to erase the phase, then we can apply \(V_{\beta,1}\) \(k_\beta-n'\) times, resulting in
\begin{equation}
h''^{(\beta)}_\beta\propto Z_{\beta,1}^{p^{n-n'}}.
\end{equation}
Now we have \(Z_{\beta,1}^{p^{n-n'}}g'=\omega^{p^{n-n'}} g'Z_{\beta,1}^{p^{n-n'}}\), so there exists local Clifford that conserves all \(h'^{(\beta)}_\beta\) and rotates \(g'\) to
\begin{equation}
g''=\bigotimes_{\beta\in\mathfrak{B}} X_{\beta,1}\otimes \bigotimes_{\alpha\notin\mathfrak{B}}g_\alpha.
\end{equation}
From Lemma \ref{commute}, Eq. \ref{eq:thm2com} indicates that now \(\bigotimes_{\beta\in\mathfrak{B}} X_{\beta,1}\) is a stabilizer up to some phase, while \(M_\beta\vec{u}^{(\beta)}-M_{\beta'}\vec{u}^{(\beta')}=0\) indicates that \(Z_{\beta,1}^{p^{n-n'}}Z_{\beta',1}^{-p^{n-n'}}\) is also a stabilizer up to some phase. After erasing the phases by applying suitable \(X\)s and \(Z\)s, we can extract an \(m'\)-partite \(D=p^{n'}\) GHZ/EPR state using Lemma \ref{extract}, which is equivalent to \(n'\) copies of \(D=p\) GHZ/EPR states.
\end{proof}
Based on Theorem \ref{thm:2}, we can take special values for \(n'\) and \(\mathfrak{B}\) to obtain the following two corollaries:
\begin{corollary}\label{cor:2}
For a triparite pure \(D=p^n\) stabilizer state, if the intersection of the spans of all subsystem phase matrices is nontrivial, then GHZ state(s) can be extracted.
\end{corollary}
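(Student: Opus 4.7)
The plan is to derive this corollary as a direct instance of Theorem~\ref{thm:2} applied with $\mathfrak{B}=\{a,b,c\}$, i.e. $m'=3$, which is precisely the case that produces GHZ states rather than EPR pairs. The hypothesis of the corollary gives a nonzero element $\vec{w}\in\bigcap_\alpha\spn(M_\alpha)$; the task is to repackage $\vec{w}$ into a module element $\vec{v}$ of order $p^n$ together with a positive integer $n'$ satisfying the two hypotheses of Theorem~\ref{thm:2}.

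First, I would set $n'$ to be the order of $\vec{w}$ as an element of $(\mathbb{Z}_{p^n})^{\mathcal{N}}$, so $1\le n'\le n$ and $p^{n'-1}\vec{w}\ne\vec{0}$. Since $p^{n'}\vec{w}=\vec{0}$, every component of $\vec{w}$ must be divisible by $p^{n-n'}$, so I can pick any $\vec{v}$ with $p^{n-n'}\vec{v}=\vec{w}$. The identity $p^{n-1}\vec{v}=p^{n'-1}\vec{w}\ne\vec{0}$ then forces $\vec{v}$ to have order exactly $p^n$. By construction $p^{n-n'}\vec{v}=\vec{w}\in\bigcap_{\beta\in\mathfrak{B}}\spn(M_\beta)$, so the first hypothesis of Theorem~\ref{thm:2} is satisfied.

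For the remaining hypothesis I must check $p^{n-1}\vec{v}\notin\spn(\sum_{\beta\in\mathfrak{B}}M_\beta)$. This is where choosing $\mathfrak{B}$ to be all three subsystems makes the argument trivial: by Eq.~\ref{eq:SPM} we have $\sum_\alpha M_\alpha=0$, so $\spn(\sum_{\beta\in\mathfrak{B}}M_\beta)=\{\vec{0}\}$, and the requirement collapses to $p^{n-1}\vec{v}\ne\vec{0}$, which has already been established. Invoking Theorem~\ref{thm:2} then extracts $n'$ copies of a tripartite $D=p$ GHZ state, yielding the claim. The only non-routine step is the lift from $\vec{w}$ to $\vec{v}$, but this is a purely module-theoretic manipulation in $\mathbb{Z}_{p^n}$; no genuine obstacle is anticipated, as all the substantive work was carried out in Theorem~\ref{thm:2} itself, and the corollary amounts to observing that the closure relation $\sum_\alpha M_\alpha=0$ automatically trivializes its second hypothesis in the tripartite, full-subsystem setting.
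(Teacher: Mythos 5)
Your proposal is correct and follows essentially the same route as the paper: take $n'$ to be the order of the intersection element, lift it to an order-$p^n$ vector $\vec{v}$ with $p^{n-n'}\vec{v}$ equal to that element, and invoke Theorem~\ref{thm:2} with $\mathfrak{B}=\{a,b,c\}$. Your explicit observation that $\sum_\alpha M_\alpha=0$ (Eq.~\ref{eq:SPM}) trivializes the second hypothesis of Theorem~\ref{thm:2} is a point the paper leaves implicit, but it is the same argument.
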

\begin{proof}
Let \(\vec{\bar{v}}\) be a nontrivial element of \(\bigcap_\alpha\spn(M_\alpha)\) and its order be \(p^{n'}\), then there exists an order-\(p^n\) \(\vec{v}\) such that \(p^{n-n'}\vec{v}=\vec{\bar{v}}\). This satisfies the conditions of Theorem \ref{thm:2} with \(\mathfrak{B}=\{a,b,c\}\), hence \(n'\) copies of GHZ state(s) can be extracted.
\end{proof}
\begin{corollary}\label{cor:3}
For a triparite pure \(D=p^n\) stabilizer state, if there exists at least one \(\vec{v}\) such that \(\vec{v}\) is in the span of two subsystem phase matrices and \(p^{n-1}\vec{v}\) is not in the span of the remaining subsystem phase matrix, then \(n\) copies of EPR pair(s) can be extracted.
\end{corollary}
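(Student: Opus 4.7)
The plan is to derive Corollary~\ref{cor:3} as a direct specialization of Theorem~\ref{thm:2} by taking $m' = 2$ and $n' = n$, with $\mathfrak{B}$ chosen to be the pair of subsystems whose spans contain $\vec{v}$. In the tripartite setting, by the symmetry between parties, I would without loss of generality relabel so that $\vec{v} \in \spn(M_a) \cap \spn(M_b)$ and $p^{n-1}\vec{v} \notin \spn(M_c)$, so that $\mathfrak{B} = \{a,b\}$ and the "remaining subsystem phase matrix" is $M_c$.

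The key translation step is to verify that the corollary's condition $p^{n-1}\vec{v} \notin \spn(M_c)$ matches Theorem~\ref{thm:2}'s condition $p^{n-1}\vec{v} \notin \spn\bigl(\sum_{\beta \in \mathfrak{B}} M_\beta\bigr) = \spn(M_a + M_b)$. This follows immediately from Eq.~\ref{eq:SPM}, which gives $M_a + M_b + M_c = 0$ and hence $\spn(M_a + M_b) = \spn(M_c)$. With $n' = n$ the relation $p^{n-n'}\vec{v} = \vec{v} \in \bigcap_{\beta \in \mathfrak{B}} \spn(M_\beta)$ is exactly the first hypothesis of the corollary.

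I also need to confirm the order-$p^n$ requirement on $\vec{v}$ that Theorem~\ref{thm:2} demands, i.e.\ $\vec{v} \not\equiv \vec{0} \pmod{p}$. Since $\vec{0} \in \spn(M_c)$ trivially, the assumption $p^{n-1}\vec{v} \notin \spn(M_c)$ forces $p^{n-1}\vec{v} \neq \vec{0}$, which is precisely the condition that $\vec{v}$ has order exactly $p^n$ in $(\mathbb{Z}_{p^n})^{\mathcal N}$. Applying Theorem~\ref{thm:2} then yields $n' = n$ copies of an $m' = 2$-partite $D = p$ state, which by the remark in the theorem are EPR pairs rather than GHZ states.

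The only subtlety I anticipate is interpretive: "in the span of two subsystem phase matrices" must be read as lying in the intersection of both spans, parallel to the phrasing in Corollary~\ref{cor:2}. Beyond this, the argument is purely algebraic bookkeeping — the identity $M_a + M_b = -M_c$ does all the work — so the main obstacle is conceptual (recognizing the correct specialization of $\mathfrak{B}$ and $n'$) rather than computational, and no additional structural analysis of the stabilizer state is required.
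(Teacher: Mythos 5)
Your proposal is correct and follows essentially the same route as the paper: specialize Theorem~\ref{thm:2} with \(\mathfrak{B}\) the two relevant subsystems and \(n'=n\), noting that \(p^{n-1}\vec{v}\notin\spn(M_c)\) forces \(\vec{v}\) to have order \(p^n\). Your explicit check that \(\spn\left(\sum_{\beta\in\mathfrak{B}}M_\beta\right)=\spn(M_c)\) via \(\sum_\alpha M_\alpha=0\) is a small detail the paper leaves implicit, but it is the same argument.
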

\begin{proof}
Since \(p^{n-1}\vec{v}\) not within the span of a matrix implies that \(\vec{v}\) is of order \(p^n\), this satisfies the conditions of Theorem \ref{thm:2} with \(\mathfrak{B}\) being the two relevant subsystems and \(n'=n\), hence \(n\) copies of EPR pair(s) can be extracted.
\end{proof}
\section{Conclusion}
We generalize the decomposition of tripartite stabilizer states to arbitrary qudit dimensions, enabling entanglement manipulation for prime-power systems previously inaccessible with Clifford operations. This facilitates protocols such as entanglement shaping~\cite{zhong2021entanglement} and quantum secret sharing in arbitrary-level qudit systems. Future work could explore applying this method to more subsystems, the possibility of characterizing any \(m\)-partite stabilizer state with subsystem phase matrices and/or reducing all \(D=p^n\) stabilizer states to \(D=p\) stabilizer states with local unitaries beyond Clifford operations.
\section{Acknowledgments}
We acknowledge support from the ARO(W911NF-23-1-0077), ARO MURI (W911NF-21-1-0325), AFOSR MURI (FA9550-21-1-0209, FA9550-23-1-0338), DARPA (HR0011-24-9-0359, HR0011-24-9-0361), NSF (ERC-1941583, OMA-2137642, OSI-2326767, CCF-2312755, OSI-2426975), and the Packard Foundation (2020-71479). This material is based upon work supported by the U.S. Department of Energy, Office of Science, National Quantum Information Science Research Centers and Advanced Scientific Computing Research (ASCR) program under contract number DE-AC02-06CH11357 as part of the InterQnet quantum networking project.
\bibliography{savedrecs}
\end{document}